\documentclass[11pt]{article}

\usepackage[margin=1in]{geometry}

\usepackage{amsfonts,amsmath,amsthm,amssymb}
\usepackage{graphicx}
\usepackage{enumitem}
\usepackage{hyperref}

\newcommand{\R}{\mathbb R}

\DeclareMathOperator{\interior}{int}

\newtheorem{theorem}{Theorem}
\newtheorem{lemma}[theorem]{Lemma}
\newtheorem{corollary}[theorem]{Corollary}

\theoremstyle{remark}
\newtheorem*{remark}{Remark}

\title{On the complexity of the free space of a translating box in $\R^3$}
\author{Gabriel Nivasch\thanks{Department of Computer Science, School of Computer Science, Ariel University, Ariel, Israel. \texttt{gabrieln@ariel.ac.il}}}
\date{}

\begin{document}
	\maketitle
	
\begin{abstract}		
	Consider a convex polyhedral robot $B$ that can translate (without rotating) amidst a finite set of non-moving polyhedral obstacles in $\R^3$. The \emph{free space} $\mathcal F$ of $B$ is the set of all positions in which $B$ is disjoint from the interior of every obstacle.
		
	Aronov and Sharir (1997) derived an upper bound of $O(n^2\log n)$ for the combinatorial complexity of $\mathcal F$, where $n$ is the total number of vertices of the obstacles, and the complexity of $B$ is assumed constant.
		
	Halperin and Yap (1993) showed that, if $B$ is either a box or a ``flat'' convex polygon, then a tighter bound of $O(n^2\alpha(n))$ holds.  Here $\alpha(n)$ is the inverse Ackermann function.
		
	In this paper we prove that if $B$ is a box, then the complexity of $\mathcal F$ is $O(n^2)$. Furthermore, if $B$ is a convex polygon whose edges come in parallel pairs, then the complexity of $\mathcal F$ is $O(n^2)$ as well. These results settle the question of the asymptotical worst-case complexity of $\mathcal F$ for a box, as well as for all convex polygons.
		
	Keywords: motion planning, computational geometry, Minkowski sum, lower envelope
\end{abstract}
	
\section{Introduction}\label{sec_intro}
	
One of the most basic problems studied in algorithmic motion planning is that of moving a rigid ``robot'' $B$ among fixed obstacles. There are many possible instances of the problem, depending on the dimension of the ambient space ($\R^2$, $\R^3$, etc.), the shape of the robot, and the type of motion allowed (e.g., whether the robot is allowed to rotate or just to translate).
	
Usually we are given an initial placement of the robot, as well as a desired final placement, and the objective is to find a continuous motion that takes the robot from the initial placement to the final one, while avoiding the obstacles at all times (or determine that no such motion exists).
	
Each placement of the robot can be parametrized by a $k$-tuple of real numbers, where $k$ is the number of degrees of freedom of the robot. The \emph{configuration space} is thus a $k$-dimensional space, each point of which corresponds to a placement of the robot. The \emph{free space} of the robot is the subset of the configuration space that corresponds to all placements that are \emph{free} (or \emph{legal}), in the sense that the robot does not intersect any obstacle. In the case where both the robot and the obstacles are polyhedra in $\R^3$ and the robot can only translate, the configuration space is also three-dimensional, and the free space has polyhedral boundaries, consisting of vertices, edges, and faces.
	
A basic parameter in the analysis of many motion planning algorithms is the \emph{combinatorial complexity} of the free space, meaning the total number of vertices, edges and faces in the boundary of the free space. Hence, a natural question is to determine the worst-case complexity of the free space in different scenarios. See the survey chapter by Halperin, Salzman, and Sharir~\cite{handbookChapter} for more background on algorithmic motion planning.
	
In this paper we consider the case in which the robot is either a ``flat'' convex polygon or a convex polyhedron that can only translate, and the obstacles are polyhedra in $\R^3$. We denote the total number of vertices of the obstacles by $n$. Then the total number of edges and faces of the obstacles is $\Theta(n)$ as well. The number of vertices of the robot is assumed constant.
	
As will be explained below in more detail, the complexity of $\mathcal F$ is asymptotically determined by the number of triple contacts between the robot and obstacles, i.e., the number of free placements in which three distinct robot features (vertices, edges, or faces) simultaneously make contact with three distinct obstacle features.
	
Hence, a trivial upper bound for the complexity of the free space is $O(n^3)$. Aronov and Sharir \cite{AS} improved the upper bound to $O(n^2\log n)$. Halperin and Yap \cite{HY} studied the case where the robot is a ``flat'' convex polygon, as well as the case where the robot is a three-dimensional axis-parallel box (or simply, a ``box''). For both these cases they derived an upper bound of $O(n^2\alpha(n))$, where $\alpha(n)$ is the very slow-growing inverse Ackermann function. The inverse-Ackermann factor arises from consideration of upper and lower envelopes of segments.
	
Halperin and Yap \cite{HY} also proved that if the robot is a rectangle and the obstacles are $n$ \emph{lines}, then the complexity of $\mathcal F$ is $O(n^2)$.

\begin{figure}
	\centering\includegraphics[width=14cm]{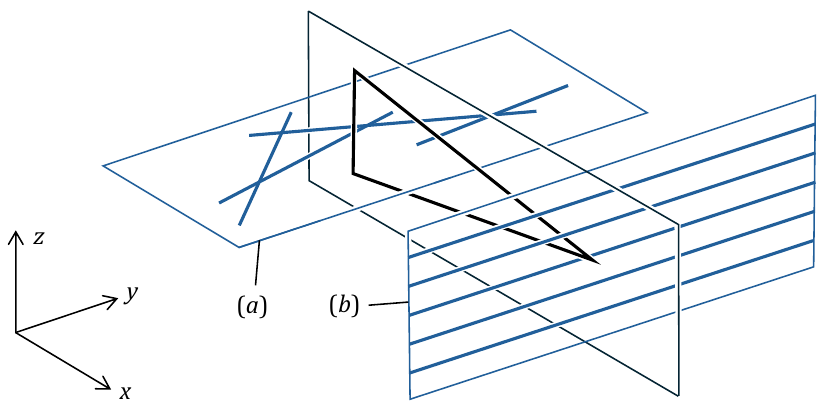}
	\caption{\label{fig_triangle_lower_bound}Configuration that achieves $\Omega(n^2\alpha(n))$ triple contacts with a triangular robot. The robot $B$ is parallel to the $xz$-plane. There is a configuration of $n/2$ obstacle edges parallel to the $xy$-plane that form $\Omega(n\alpha(n))$ lower-envelope intersections, in which the edges are given slightly different $z$-coordinates ($a$). There are another $n/2$ obstacle edges parallel to the $y$-axis with different $z$-coordinates ($b$). There are $\Omega(n \alpha(n))$ ways the vertical edge of $B$ can make a double contact with two edges of $(a)$. For each such possibility, $B$ can slide up and down, so the opposite vertex of $B$ can make $\Omega(n)$ different third contacts with an edge of $(b)$.}
\end{figure}
	
Regarding lower bounds for the complexity of $\mathcal F$, a lower bound of $\Omega(n^2)$ is easy to obtain for any robot. For some types of robots (e.g., a triangle) one can obtain a lower bound of $\Omega(n^2\alpha(n))$ (Aronov and Sharir \cite{AS}). See Figure \ref{fig_triangle_lower_bound}.

For the two-dimensional case, in which $B$ is a convex polygon free to translate in $\R^2$ and the obstacles are polygons with a total of $n$ vertices, the complexity of $\mathcal F$ is $O(n)$ (Kedem, Livne, Pach, and Sharir~\cite{klps86}).

Note that, by affine transformations, the cases where $B$ is a cube or a box or a parallelepiped are all equivalent.
	
\subsection{Our results}
	
In this paper we prove the following:

\begin{theorem}\label{thm_box}
	Let $B$ be a box-shaped robot that is free to translate in $\R^3$ amidst polyhedral obstacles that have a total of $n$ vertices. Then the complexity of the free space of $B$ is $O(n^2)$.
\end{theorem}

\begin{figure}
\centering\includegraphics[width=6cm]{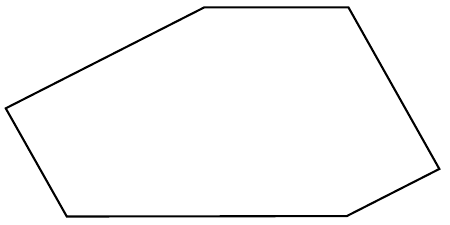}
\caption{\label{fig_fully_parallel} A fully-parallel polygon.}
\end{figure}
	
We also study polygonal robots in more depth. We call a convex polygon \emph{fully parallel} if its edges come in parallel pairs (not necessarily of the same length). See Figure \ref{fig_fully_parallel}.

It is easy to see that the lower bound of $\Omega(n^2\alpha(n))$ shown in Figure \ref{fig_triangle_lower_bound} can be achieved with any polygonal robot that is not fully parallel: Any edge of $B$ that does not have a parallel edge on the other side can serve as the vertical robot edge of Figure \ref{fig_triangle_lower_bound}. The lower bound of $\Omega(n^2\alpha(n))$ can also be achieved with any polyhedral robot that can be projected to form a non-fully-parallel polygon: Since our problem is invariant under affine transformations, we can transform $B$ into an almost-flat non-fully-parallel polygon, so that its thickness is negligible, and proceed as before.

We prove the following:

\begin{theorem}\label{thm_fully-parallel}
	Let $B$ be a fully-parallel convex polygon that is free to translate in $\R^3$ amidst polyhedral obstacles that have a total of $n$ vertices. Then the complexity of the free space of $B$ is $O(n^2)$ (where the hidden constant depends on $B$).
\end{theorem}

Hence, the asymptotical worst-case complexity of $\mathcal F$ is now fully determined if $B$ is a box (or parallelepiped), or any convex polygon.

\section{Preliminaries}\label{sec_preliminaries}

Let $B\subset \R^3$ be a polyhedral robot that can translate among a set $\mathcal C = \{C_1,\ldots, C_k\}$ of pairwise-disjoint polyhedral obstacles in $\R^3$. A placement of $B$ in space can be specified by a point $v$ in \emph{configuration space} $\R^3$. At that placement, the robot occupies the points $B+v=\{b+v:b\in B\}$. Such a placement is \emph{free} (or \emph{legal}) if $B+v$ is disjoint from the interior of every obstacle in $\mathcal C$. The set of all points in configuration space that correspond to free placements of the robot is called the \emph{free space} of the robot, which we denote by $\mathcal F$.

An observation dating back to Lozano-P\'erez and Wesley \cite{lozano1979} is that the robot can be replaced by a point if the obstacles are appropriately expanded. Specifically, the free space is given by
\begin{equation*}
	\mathcal F = \R^3\setminus \bigcup_{C\in\mathcal C} (\interior(C)\oplus (-B))
\end{equation*}
where $X\oplus Y = \{x+y:x\in X, y\in Y\}$ denotes the \emph{Minkowski sum} of two sets of points.

We will assume for simplicity that the obstacles are in general position with respect to the robot, meaning, no obstacle edge is parallel to a robot face, no obstacle face is parallel to a robot edge, etc. It can be shown that such degeneracies can only decrease the complexity of $\mathcal F$. (See \cite{AS} and references cited there for more details on the general position assumption in the motion planning setting.)

The robot is in \emph{contact} with an obstacle $C$ in a certain placement if the robot intersects the boundary of $C$ but not its interior in that placement. A \emph{contact specification} is a pair $O=(f, g)$ where $f$ is a feature (vertex, edge, or face) of the robot, and $g$ is a feature of an obstacle, such that $\dim f + \dim g\le 2$. A (not necessarily free) placement of the robot is said to \emph{realize} the contact specification $O$ if at that placement $f$ intersects $g$.

We call the contact specification $O=(f,g)$ \emph{generic}, \emph{singly degenerate}, or \emph{doubly degenerate}, according to whether if $\dim f + \dim g$ equals $2$, $1$, or $0$, respectively. If $O$ is generic, then we call it a \emph{vertex contact}, an \emph{edge contact}, or a \emph{face contact}, according to whether the robot feature $f$ is a vertex, an edge, or a face, respectively. See Figure \ref{fig_triple_contact}.

Due to the general position assumption, each face of $\mathcal F$ arises from a connected set of robot placements that realize a fixed generic contact specification.

Similarly, each edge of $\mathcal F$ corresponds to a connected set of robot placements that simultaneously realize two generic contact specifications. Some of these edges realize a singly-degenerate contact specification.

Finally, each vertex of $\mathcal F$ arises from a robot placement simultaneously realizing three generic contact specifications. See Figure \ref{fig_triple_contact} again. Some vertices of $\mathcal F$ correspond to the robot simultaneously realizing a singly-degenerate contact specification and a generic contact specification, while other vertices of $\mathcal F$ correspond to the robot realizing a doubly-degenerate contact specification.

\begin{figure}
	\centering\includegraphics[width=9cm]{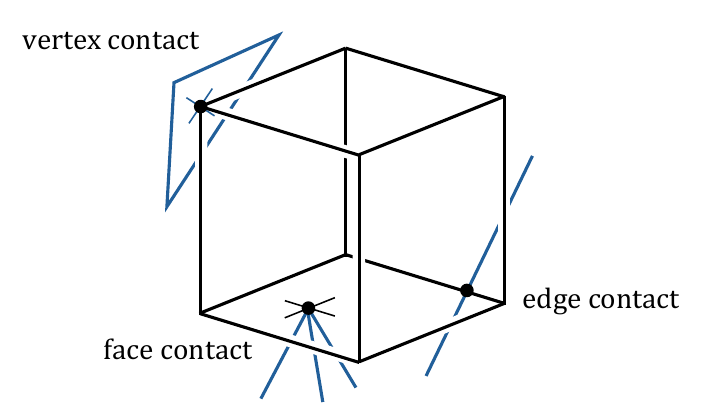}
	\caption{\label{fig_triple_contact} A placement of $B$ (here a cube) making three contacts with obstacles, in this case a vertex contact, an edge contact, and a face contact. This placement of $B$ corresponds to a vertex of the free space.}
\end{figure}

In order to asymptotically bound the complexity of $\mathcal F$, it is enough to bound its number of vertices: By the general position assumption, all vertices of $\mathcal F$ have degree $3$, hence the number of edges equals $3/2$ times the number of vertices. Furthermore, each face is bounded by at least three edges, so the number of faces is at most $2/3$ times the number of edges.

The number of vertices of $\mathcal F$ that involve degenerate contact specifications is easily bounded by $O(n^2)$, since each such vertex arises from at most two contacts. Hence, we only need to consider vertices of $\mathcal F$ that arise from three generic contact specifications.

\section{Upper and lower envelopes}\label{sec_lower_env}

In this section we present some results, some of them new, on upper and lower envelopes, which we will need in subsequent sections.

Let $S$ be a finite collection of (possibly intersecting) graphs of total or partial functions $\R\to\R$. The \emph{lower envelope} of $S$ is the graph of the pointwise minimum of these functions, i.e., $S$ consists of the parts of the function graphs that are visible from below at $y=-\infty$. Similarly, the \emph{upper envelope} of $S$ consists of the parts of the function graphs that are visible from above at $y=+\infty$.

\begin{figure}
	\centering\includegraphics[width=8cm]{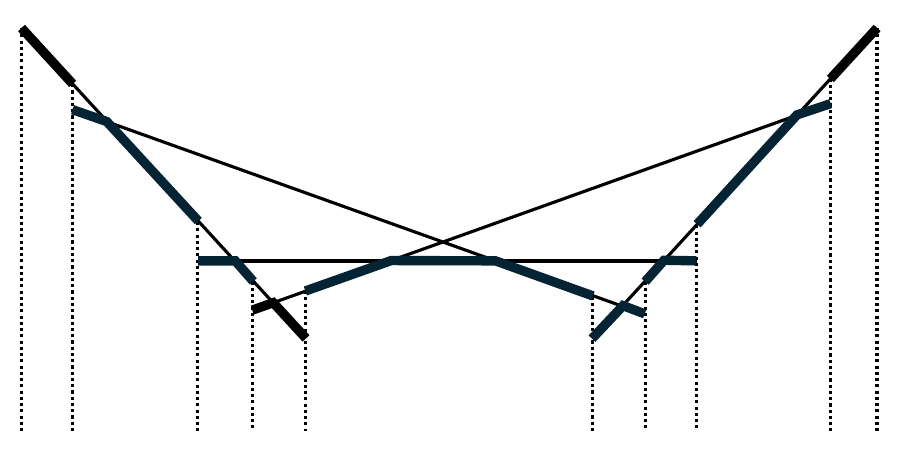}
	\caption{\label{fig_lower_env}The lower envelope of a set of segments, decomposed into concave chains.}
\end{figure}

The case that concerns us in this paper is the one in which $S$ is a collection of $n$ nonvertical line segments. In this case, the lower envelope of $S$ consists of at most $2n-1$ concave chains, where the endpoints of the concave chains are the segment endpoints that are visible from below (see Figure \ref{fig_lower_env}). Similarly, the upper envelope of $S$ consists of at most $2n-1$ convex chains.

Hart and Sharir \cite{HS} proved that, if $S$ is a collection of nonvertical line segments, then the combinatorial complexity of each envelope of $S$ is at most $O(n\alpha(n))$, where $\alpha(n)$ is the inverse-Ackermann function. Furthermore, this bound is worst-case tight (Wiernik and Sharir \cite{WS}).

Halperin and Yap (Lemma 2.3 in \cite{HY}) showed that if $S_1$ and $S_2$ are two collections of nonvertical line segments in the plane, with $S_1\cup S_2$ in general position and $|S_1\cup S_2|=n$, then each envelope (upper or lower) of $S_1$ intersects each envelope (upper or lower) of $S_2$ in at most $O(n\alpha(n))$ points. We refine this result:

\begin{lemma}\label{lem_env}
	Let $S$ be a collection of $n$ nonvertical line segments in the plane in general position, and let $S_1$, $S_2$ be a partition of $S$ into two parts. Then:
	\begin{enumerate}
		\item The lower envelopes of $S_1$ and $S_2$ (and, similarly, the upper envelopes of $S_1$ and $S_2$) intersect one another in at most $O(n\alpha(n))$ points.
		\item The lower envelope of $S_1$ and the upper envelope of $S_2$ (and vice versa) intersect one another in at most $O(n)$ points.
		\item If no two segments in $S_1$ intersect, then each envelope of $S_1$ intersects each envelope of $S_2$ in at most $O(n)$ points.
		\item If every segment in $S_1$ has larger slope than every segment in $S_2$, then each envelope of $S_1$ intersects each envelope of $S_2$ in at most $O(n)$ points.
	\end{enumerate}
	Furthermore, all these bounds are worst-case tight.
\end{lemma}

\begin{proof}
	Claim (1) follows from the Hart--Sharir result applied on $S_1\cup S_2$. Worst-case tightness follows by taking a configuration $S$ of $n$ segments that realizes $\Theta(n\alpha(n))$ lower-envelope intersections, and considering a random partition of $S$ into two parts. Each lower-envelope intersection has a probability of $1/2$ of its two segments falling into different parts. By linearity of expectation, the expected number of lower-envelope intersections between segments of different parts is $\Theta(n\alpha(n)/2)=\Theta(n\alpha(n))$, and there must be a partition that realizes at least this value.
	
	For Claims (2) and (3), partition the plane into $2n+1$ vertical slabs by tracing vertical lines through the endpoints of the segments. Within each slab, the lower envelope of each $S_i$ is part of a single concave chain, and the upper envelope of each $S_i$ is part of a single convex chain. Then, Claim (2) follows since a concave chain and convex chain can intersect in at most two points. And Claim (3) follows since in this case, each chain of $S_1$ consists of a single segment, so it intersects a chain of $S_2$ in at most two points.
	
	For the cases of Claim (4) not yet proven, consider the lower envelope of $S_1\cup S_2$. Due to the constraints on the slopes, each concave chain in the envelope contains at most one intersection between a segment of $S_1$ and a segment of $S_2$. The case of the upper envelope is symmetric.
\end{proof}

\section{Triple vertex contacts with a triangular robot}\label{sec_triangular}

Following Halperin and Yap \cite{HY}, we first consider the following auxiliary problem: Suppose the robot $B$ is a ``flat'' triangle that can translate amidst polyhedral obstacles in $\R^3$. We would like to bound the number of vertices of the free space $\mathcal F$ that correspond to triple vertex contacts, i.e., the number of free placements in which the three vertices of $B$ make contact with three different obstacle faces (see Figure \ref{fig_triangular_robot}). Halperin and Yap derived an upper bound of $O(n^2\alpha(n))$ for the number of such contacts. We improve this bound to $O(n^2)$ by making a minor modification to their argument.

\begin{lemma}\label{lem_triangular_vvv}
	Let $B$ be a triangular robot that is free to translate in $\R^3$ amidst polyhedral obstacles that have a total of $n$ vertices. Then the number of triple vertex contacts that $B$ can make with obstacles is $O(n^2)$.
\end{lemma}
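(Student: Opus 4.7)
The plan is to follow Halperin and Yap's approach with a single substitution: replace their use of the Hart--Sharir $O(n\alpha(n))$ envelope bound with Lemma~\ref{lem_seg_noint}, exploiting a within-family non-crossing structure of the relevant segments. Concretely, I would fix one obstacle face $f_1$ on which vertex $v_1$ of the triangular robot $B$ makes contact; there are $O(n)$ such choices, and for each I will bound the number of triple vertex contacts with $v_1$ on $f_1$ by $O(n)$. The placements satisfying $v_1+v\in\text{plane}(f_1)$ form a 2D plane $P_1\subset\R^3$, within which $v$ is further restricted to the polygonal region $f_1-v_1$.

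Within $P_1$, for each obstacle face $g$, the set $\{v\in P_1 : v_2+v\in g\}$ is a line segment (the intersection of two affine planes, clipped to $g$); let $S_2$ denote the family of these segments, and define $S_3$ analogously. A triple vertex contact with $v_1$ on $f_1$ corresponds to a free placement $p\in P_1$ lying at an intersection $s_2\cap s_3$ of some $s_2\in S_2$ with some $s_3\in S_3$. The key observation is that $S_2$ has a within-family non-crossing property: each obstacle $C$ contributes a cross-section $(C-v_2)\cap P_1$, a polygonal subregion of $P_1$ whose boundary consists of segments of $S_2$ indexed by the faces of $C$ meeting $P_1+v_2$, and the cross-sections for distinct obstacles have pairwise disjoint interiors because the obstacles themselves do. In generic position this forces any two segments of $S_2$ to be non-crossing (segments sharing a polygon vertex may share an endpoint but do not cross transversally), and the same holds for $S_3$.

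Following the Halperin--Yap case analysis, stratified by the orientation of the ``free sector'' at $p$ (i.e., by which of the four sectors cut out by $s_2,s_3$ is locally outside every cross-section; equivalently, by which side of $s_2$ the relevant $v_2$-cross-section lies on, and similarly for $s_3$), one checks that $p$ must lie in an envelope (upper or lower) of $S_2$ and simultaneously in an envelope of $S_3$. Applying Lemma~\ref{lem_seg_noint} then bounds the number of such intersections by $O(|S_2\cup S_3|)=O(n)$ for each fixed $f_1$, and summing over the $O(n)$ choices of $f_1$ yields the $O(n^2)$ bound.

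The main obstacle I expect lies in the case analysis on the orientation of the free sector: one must carefully verify that freedom of the placement forces $p$ to lie in the appropriate envelope of both $S_2$ and $S_3$, despite potential interference from other cross-sections not in contact with any robot vertex (e.g.~cross-sections lying entirely ``below'' $p$ in the relevant direction, which do not block the free sector but do appear as segments of $S_2\cup S_3$). Once this envelope characterization is established, the improvement over Halperin and Yap is purely the shaving of the $\alpha(n)$ factor, enabled by the within-family non-crossing structure captured by Lemma~\ref{lem_seg_noint}, in contrast to their application of the more general Hart--Sharir bound which treats the segments as unrestricted.
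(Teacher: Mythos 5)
Your overall strategy --- fix the contact $(v_1,f_1)$, pass to a two-dimensional parametric plane, observe that the segments induced by a fixed second robot vertex are pairwise non-crossing because distinct obstacle faces have disjoint relative interiors, and then invoke Lemma~\ref{lem_seg_noint} in place of the Hart--Sharir bound --- is exactly the paper's key idea, and the non-crossing observation is correct. But the step you yourself flag as ``the main obstacle'' is a genuine gap, not a technicality: the envelope claim is false as stated. A free placement $p=s_2\cap s_3$ need \emph{not} lie in any envelope of $S_2$ (or of $S_3$): a segment of $S_2$ can pass strictly below $p$, and another strictly above, while the corresponding obstacle cross-sections lie entirely on the far side, so that $p$ is perfectly legal yet hidden from both the upper and the lower envelope. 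To get a one-sided illegality constraint (``everything above/below this segment is illegal''), the paper introduces a \emph{covering} relation between contact specifications and plots in the parametric plane of $O_1$ only the segments of contacts that cover $O_1$, split per robot vertex into a left-covering and a right-covering family; only for these restricted families does legality force the triple-contact point onto the appropriate envelope, and only then does Lemma~\ref{lem_seg_noint} apply.

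Even after that repair, the argument covers only triples in which one contact is covered by both others. The covering relation on a triple can also be \emph{circular} ($O_1$ covered by $O_2$, $O_2$ by $O_3$, $O_3$ by $O_1$), and then in each of the three parametric planes exactly one of the other two contacts is a covering one, so the triple-contact point is never an envelope--envelope intersection of two restricted families and the lemma on non-crossing families never applies. The paper needs a separate charging argument here: each circular triple is charged to the nearest lower-envelope breakpoint to its left among the three relevant families, the charge is injective because the triple can be reconstructed from the breakpoint by following lowest segments around the cycle, and since the families are non-crossing every breakpoint is a segment endpoint, giving $O(n^2)$ in total. Your proposal contains no counterpart of this case, so as written it does not prove the lemma.
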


(Recall that if we do not restrict our attention to vertex contacts then $\Omega(n^2\alpha(n))$ triple contacts are possible, as shown in Figure \ref{fig_triangle_lower_bound}.)

\begin{figure}
	\centering\includegraphics[width=7cm]{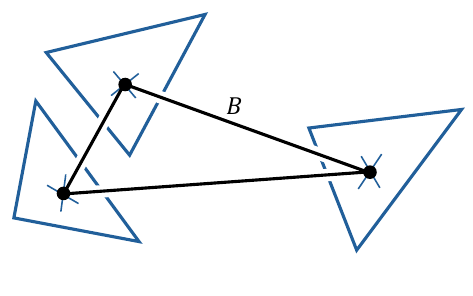}
	\caption{\label{fig_triangular_robot}A ``flat'' triangular robot $B$ in $\R^3$ making three vertex contacts.}
\end{figure}

We now proceed to prove Lemma \ref{lem_triangular_vvv}. Suppose without loss of generality that the triangle $B$ is ``horizontal'', i.e., parallel to the $xy$-plane. We can assume all obstacle faces are triangles. By the general-position assumption, the vertices of each triangular obstacle face have different $z$-coordinates. Let us split each face into two \emph{subtriangles} by a horizontal line passing through the vertex with middle $z$-coordinate. Thus, each subtriangle has two non-horizontal sides and one horizontal side.

Let us see what happens to the intersection $f(z_0)$ between a subtriangle $f$ and a horizontal plane $z=z_0$, as we vary $z_0$. For every $z_0$ in the $z$-range of $f$, $f(z_0)$ is a line segment. As we increase $z_0$ at a constant rate, the line segment moves parallel to itself at constant speed, and its two endpoints also move at constant speed.

\subsection{Contact specifications and covering}

In our case, the relevant contact specifications are those of the form $O=(p,f)$ where $p$ is a vertex of $B$ and $f$ is a subtriangle.

\begin{figure}
	\centering\includegraphics[width=14cm]{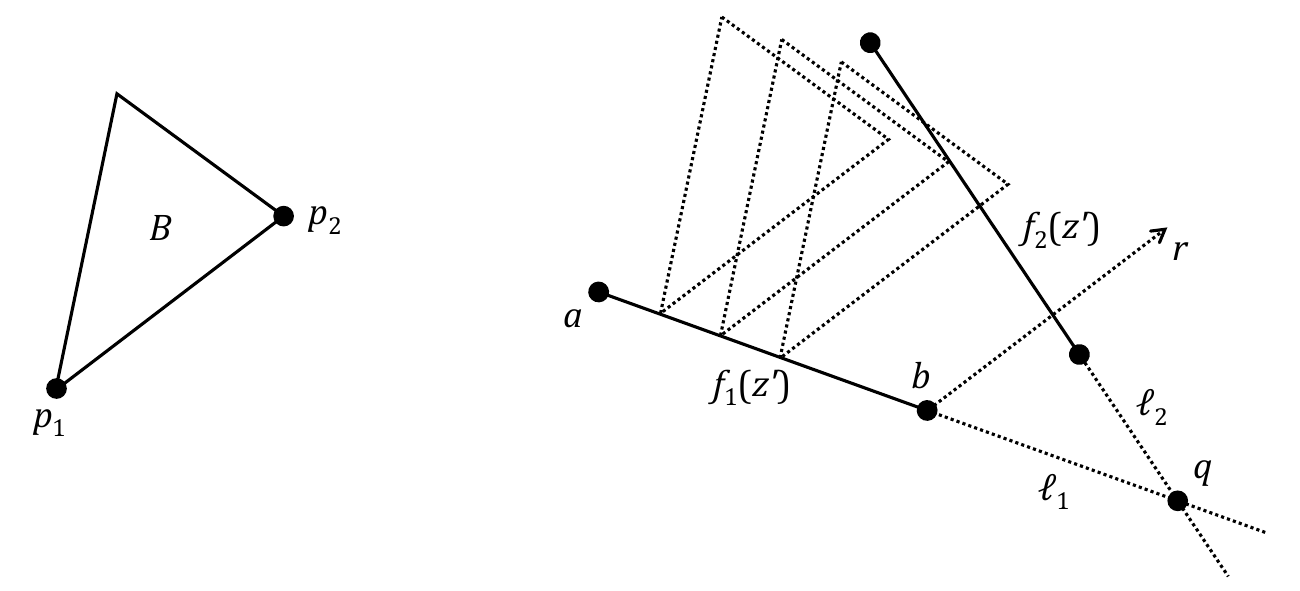}
	\caption{\label{fig_covers}At this value $z=z'$, the contact specification $O_2=(p_2,f_2)$ covers the contact specification $O_1=(p_1,f_1)$ at the right, since $q$ is to the right of $f_1(z')$ and the ray $r$ from $b$ with direction $\overrightarrow{p_1p_2}$ intersects $f_2(z')$. As we slide the robot $B$ along $f_1(z')$, it goes from not colliding with $f_2(z')$ to colliding with it.}
\end{figure}

Consider two contact specifications $O_1=(p_1, f_1)$, $O_2=(p_2, f_2)$, where $p_1\neq p_2$, and where the $z$-ranges of $f_1$, $f_2$ overlap. Let $[z_0, z_1]$ be the intersection of the $z$-ranges of $f_1$, $f_2$. Given $z'\in[z_0, z_1]$, we would like to define whether $O_2$ \emph{covers} $O_1$ at $z=z'$. Let $\ell_1=\ell_1(z')$ and $\ell_2=\ell_2(z')$ be the lines containing the segments $f_1(z')$ and $f_2(z')$. Let $q$ be the point of intersection of $\ell_1$ and $\ell_2$, and suppose that $f_1(z')$ does not contain $q$. Let $a$, $b$ be the endpoints of $f_1(z')$, with $a$ left of $b$ (i.e., with smaller $x$-coordinate). Let $r$ be the ray with direction $\overrightarrow{p_1p_2}$ emerging from the endpoint among $a$, $b$ that is closer to $q$. If $r$ intersects $f_2(z')$ then we say that $O_2$ \emph{covers} $O_1$. If $q$ is to the right of $f_1(z')$ then we say that $O_2$ covers $O_1$ \emph{at the right}; otherwise we say that $O_2$ covers $O_1$ \emph{at the left}. See Figure \ref{fig_covers}.

\begin{figure}
	\centering\includegraphics[width=13cm]{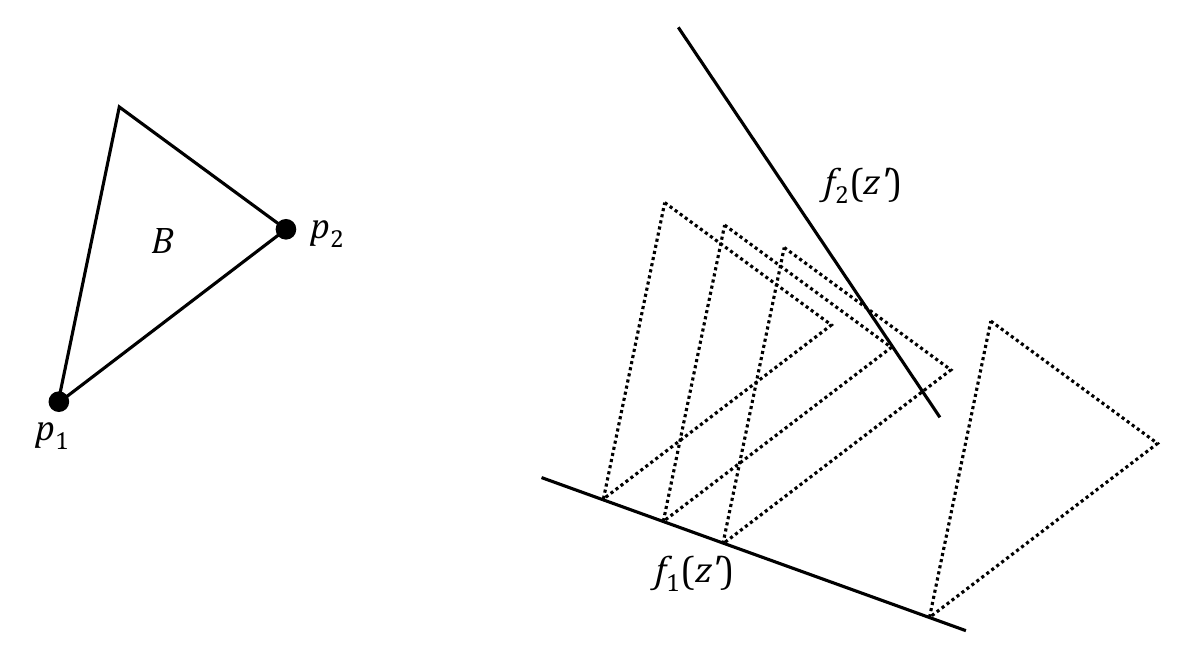}
	\caption{\label{fig_does_not_cover}Here $O_2=(p_2,f_2)$ does not cover $O_1=(p_1,f_1)$, and as we slide the robot $B$ along $f_1(z')$, it collides with $f_2(z')$ but then stops colliding with it.}
\end{figure}

The significance of $O_2$ covering $O_1$ is as follows. Suppose we slide the robot with its vertex $p_1$ moving along the segment $f_1(z')$ from one endpoint of $f_1(z')$ to the other. It might happen that during this motion, the edge $p_1p_2$ starts colliding with the segment $f_2(z')$ and then stops colliding with it. See Figure \ref{fig_does_not_cover}. However, if $O_2$ covers $O_1$ then this is not possible: If the edge $p_1p_2$ starts colliding with the segment $f_2(z')$, then the collision will continue until $p_1p_2$ reaches the other endpoint of $f_1(z')$.

For each generic value $z'\in[z_0,z_1]$, either $O_1$ covers $O_2$ or $O_2$ covers $O_1$. Furthermore, since the endpoints of the segments $f_1(z')$, $f_2(z')$ move linearly with respect to $z'$, the range of values of $z'$ at which $O_1$ covers $O_2$ forms a contiguous subinterval of $[z_0,z_1]$, and the range of values of $z'$ at which $O_2$ covers $O_1$ forms another contiguous subinterval of $[z_0,z_1]$.

Consider a placement of the robot realizing three vertex contact specifications $O_1=(p_1, f_1)$, $O_2=(p_2, f_2)$, $O_3=(p_3, f_3)$. Let $z'$ be the $z$-coordinate of this robot placement. For each pair of indices $i,j\in\{1,2,3\}$, $i\neq j$, either $O_i$ covers $O_j$ or $O_j$ covers $O_i$ at $z'$. Hence, one of the two following possibilities occurs: Either there is one contact specification that is covered by the other two, or the contact specifications are \emph{circular} in the sense that $O_i$ covers $O_j$, $O_j$ covers $O_k$, and $O_k$ covers $O_i$. See Figure \ref{fig_3_vertex_contacts}.

\begin{figure}
	\centering\includegraphics[width=14cm]{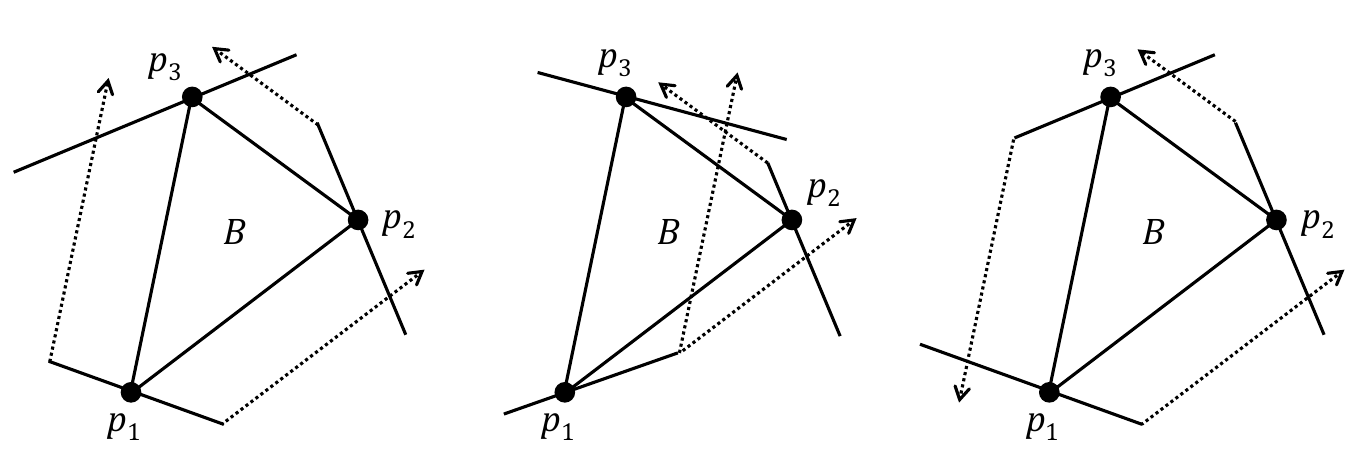}
	\caption{\label{fig_3_vertex_contacts}Robot placements realizing three vertex contact specifications $O_1$, $O_2$, $O_3$. In the left and center figures, $O_1$ is covered by $O_2$ and $O_3$. In the right figure, $O_1$ is covered by $O_2$, which is covered by $O_3$, which is covered by $O_1$.}
\end{figure}

\subsection{Parametric plane of a contact specification}

Given a contact specification $O_1=(p_1,f_1)$, we define the \emph{contact parametric plane} $P_{O_1}$ in which we parametrize by a pair of real numbers $(z', s)$ every robot placement in which the robot vertex $p_1$ makes contact with $f_1(z')$ (or with its supporting line). The number $s$ represents the distance between the left endpoint of $f_1(z')$ and the position of $p_1$ along $f_1(z')$. Thus, $P_{O_1}\cong \R^2$, and the robot placements that realize the contact $O_1$ correspond to a triangular region $\Delta_{O_1}\subset P_{O_1}$. We imagine that the $z$-axis in $P_{O_1}$ is horizontal and the $s$-axis is vertical.

We can mark each point of $\Delta_{O_1}$ as \emph{legal} or \emph{illegal} depending on whether the corresponding robot placement intersects the interior of another obstacle or not.

Given another contact specification $O_2=(p_2,f_2)$ with $p_2\neq p_1$, the set of robot placements that realize both $O_1$ and $O_2$ corresponds either to the empty set or to a line segment within $\Delta_{O_1}$. However, we plot in $P_{O_1}$ only contact specifications that cover $O_1$, and only for the $z$-ranges at which they cover it. Given $O_2$ that covers $O_1$ at a certain $z$-range, let $\sigma_{O_2}$ be the segment in $\Delta_{O_1}$ that corresponds to all robot placements that realize the double contact $O_1$, $O_2$ in this $z$-range. If $O_2$ covers $O_1$ at the left, then all the points vertically below $\sigma_{O_2}$ in the triangle $\Delta_{O_1}$ are illegal, and if $O_2$ covers $O_1$ at the right, then all the points vertically above $\sigma_{O_2}$ in $\Delta_{O_1}$ are illegal. See Figure~\ref{fig_parametric_plane}.

\begin{figure}
	\centering\includegraphics[width=16cm]{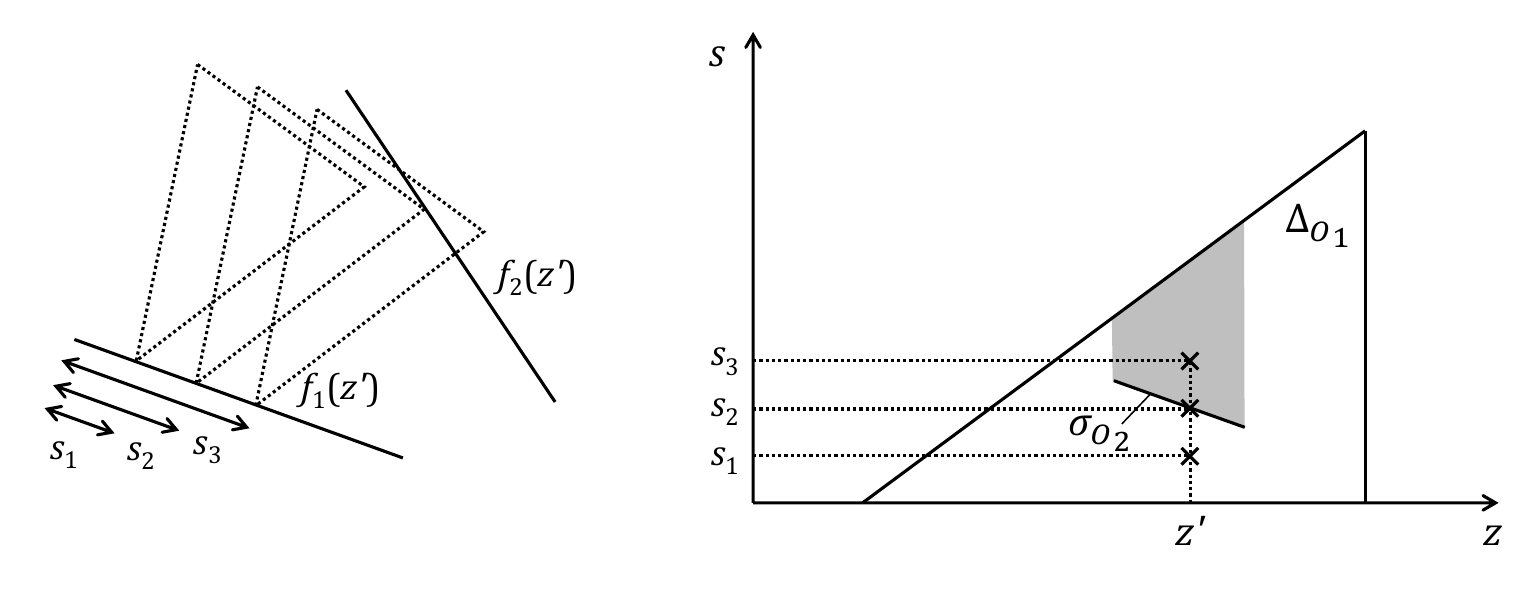}
	\caption{\label{fig_parametric_plane}Three robot placements realizing a contact specification $O_1$ with the same $z$-value (left); and their representation in the parametric plane $P_{O_1}$ (right). The shaded area within $\Delta_{O_1}$ represents robot placements that are illegal due to collision with $f_2$.}
\end{figure}

\subsection{Families of non-intersecting segments in the parametric planes}

Given a contact specification $O_1=(p_1,f_1)$ and another robot vertex $p_2$, let $L_{O_1,p_2}$ (resp.~$R_{O_1,p_2}$) be the set of all subtriangles $f$ such that $O_2=(p_2,f)$ covers $O_1$ at the left (resp.~right) for some range of $z$. For each $f\in L_{O_1,p_2}$ (resp.~$R_{O_1,p_2}$) let $\sigma(f)$ be the corresponding line segment in the parametric plane $P_{O_1}$. Let $\mathcal L_{O_1,p_2} = \{\sigma(f):f\in L_{O_1,p_2}\}$ and $\mathcal R_{O_1,p_2} = \{\sigma(f):f\in R_{O_1,p_2}\}$ be the sets of these line segments.

Here is the crucial observation that is missing in \cite{HY}: The segments in $\mathcal L_{O_1,p_2}$ are pairwise non-intersecting, since if $\sigma(f), \sigma(f')\in \mathcal L_{O_1,p_2}$ intersected, then the subtriangles $f$, $f'$ themselves would intersect. Similarly, the segments in  $\mathcal R_{O_1,p_2}$ are pairwise non-intersecting.

\subsection{Counting triple vertex contacts}\label{subsec_counting}

We are now ready to bound the number of robot placements making three simultaneous vertex contacts.

Consider first triple contacts in which one contact specification $O_1=(p_1,f_1)$ is covered by the other two, $O_2=(p_2,f_2)$, $O_3=(p_3,f_3)$. In the parametric plane $P_{O_1}$, such a triple contact corresponds to the intersection point $q$ of a segment in either $\mathcal L_{O_1,p_2}$ or $\mathcal R_{O_1,p_2}$ with a segment in either $\mathcal L_{O_1,p_3}$ or $\mathcal R_{O_1,p_3}$. Furthermore, in order for $q$ to be legal, it is necessary for $q$ to lie in the upper (resp.~lower) envelope of $\mathcal L_{O_1, p_2}$ (resp.~$\mathcal R_{O_1,p_2}$), as well as in the upper (resp.~lower) envelope of $\mathcal L_{O_1, p_3}$ (resp.~$\mathcal R_{O_1,p_3}$). Hence, Lemma \ref{lem_env} (Case 3) bounds the number of these intersection points $q$, for fixed $O_1$, by $O(n)$. Since there are $O(n)$ choices of $O_1$, the total bound for this type of triple-intersection contacts is $O(n^2)$.

Now consider circular triple contacts. Consider a robot placement at $z=z'$ realizing the triple contact $O_1=(p_1,f_1)$, $O_2=(p_2,f_2)$, $O_3=(p_3,f_3)$ in which $O_1$ is covered by $O_2$, $O_2$ is covered by $O_3$, and $O_3$ is covered by $O_1$. Say that all the coverings are at the right (the other cases are similar). This robot placement corresponds to a point $q_1$ in the segment $\sigma(f_2)\in \mathcal R_{O_1,p_2}$, which is the lowest segment of $\mathcal R_{O_1,p_2}$ at $z=z'$. The robot placement also corresponds to a point $q_2$ in the segment $\sigma(f_3)\in\mathcal R_{O_2,p_3}$, which is the lowest segment of $\mathcal R_{O_2,p_3}$. Finally, the robot placement also corresponds to a point $q_3$ in the segment $\sigma(f_1)\in\mathcal R_{O_3,p_1}$, which is the lowest segment of $\mathcal R_{O_3,p_1}$. All three points $q_1$, $q_2$, $q_3$ have $z$-coordinate equal to $z'$.

Consider the lower envelope of $\mathcal R_{O_1,p_2}$. This lower envelope is piecewise linear, with breakpoints at discrete values of $z$. Since the segments in $\mathcal R_{O_1,p_2}$ are pairwise nonintersecting, each breakpoint is an endpoint of a segment of $\mathcal R_{O_1,p_2}$. Let $z_1$ be the largest $z$-coordinate of a breakpoint that is smaller than $z'$. Define $z_2$, $z_3$ similarly, by looking at $\mathcal R_{O_2,p_3}$, $\mathcal R_{O_3,p_1}$, respectively. Let $z'' = \max\{z_1,z_2,z_3\}$.

Hence, in the $z$-range $[z'',z']$ the lowest segment of each $\mathcal R_{O_1,p_2}$, $\mathcal R_{O_2,p_3}$, $\mathcal R_{O_3,p_1}$ is the aforementioned $\sigma(f_2)$, $\sigma(f_3)$, $\sigma(f_1)$, respectively. We charge the triple contact $O_1$, $O_2$, $O_3$ to the breakpoint $z''$.

In this charging scheme, each breakpoint in a lower envelope of a parametric plane is charged at most once, because we can reconstruct the triple contact given the breakpoint: Given $z''$ which is a breakpoint in $\mathcal R_{(p_1,f_1),p_2}$, say, we take the segment $\sigma(f_2)$ that is lowest in $\mathcal R_{(p_1,f_1),p_2}$ just after $z=z''$. Once we know $f_2$, we know we need to look at $\mathcal R_{(p_2,f_2),p_3}$. Then we take the segment $\sigma(f_3)$ that is lowest in $\mathcal R_{(p_2,f_2),p_3}$ just after $z=z''$. Once we know $f_3$, we know we need to look at $\mathcal R_{(p_3,f_3),p_1}$. Then we take the segment $\sigma(f_4)$ that is lowest in $\mathcal R_{(p_3,f_3),p_1}$ just after $z=z''$. If $f_4=f_1$ then we have successfully reconstructed the triple contact-specification $(p_1,f_1)$, $(p_2,f_2)$, $(p_3,f_3)$ (which might or might not be realizable as a triple contact). If $f_4\neq f_1$ then the breakpoint $z''$ we started with does not receive any charge.

Since each breakpoint is a segment endpoint and there are $O(n^2)$ segments in all the families $\mathcal R_{O,p}$ altogether, there are at most $O(n^2)$ circular triple contacts in which all coverings take place at the right. The other possibilities, in which some of the coverings take place at the left, are treated similarly.

This concludes the proof of Lemma \ref{lem_triangular_vvv}.

\begin{remark}
	Halperin and Yap \cite{HY} defined each $\mathcal L_{(p_1,f_1),p_2}\cup\mathcal L_{(p_1,f_1),p_3}$ as a single set, instead of as two separate sets. Therefore, the segments in their sets can intersect, and that is why they got the $\alpha(n)$ factor in their bound. Our approach of defining $\mathcal L_{(p_1,f_1),p_2}$ and $\mathcal L_{(p_1,f_1),p_3}$ as two separate sets (and similarly defining $\mathcal R_{(p_1,f_1),p_2}$ and $\mathcal R_{(p_1,f_1),p_3}$ as two separate sets) allows us to apply Lemma \ref{lem_env} (Case 3) and get a bound free of the $\alpha(n)$ factor.
\end{remark}

Following Halperin and Yap, we conclude the following:

\begin{corollary}\label{cor_poly_vvv}
	Let $B$ be a polyhedral robot of constant complexity that is free to translate in $\R^3$ amidst polyhedral obstacles that have a total of $n$ vertices. Then the number of triple vertex contacts that $B$ can make with obstacles is $O(n^2)$.
\end{corollary}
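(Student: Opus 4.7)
The plan is to reduce the general polyhedral case directly to the triangular case already handled by Lemma \ref{lem_triangular_vvv}. Since $B$ has constant complexity, its vertex set has size $O(1)$, so there are only $\binom{|V(B)|}{3} = O(1)$ unordered triples of vertices of $B$. I would bound the contribution of each such triple separately and then sum, using that each triple vertex contact of $B$ involves exactly one such triple of vertices.

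Fix a triple $\{v_1,v_2,v_3\}$ of vertices of $B$ and form the flat triangle $T = \operatorname{conv}(v_1,v_2,v_3)$, viewed as a triangular robot in $\R^3$. Under the standard (convex robot) convention, the containment $T \subseteq B$ holds, so any placement at which $B$ avoids the interiors of all obstacles is also a placement at which $T$ avoids those interiors. Consequently, every free placement of $B$ that realizes triple vertex contacts at $v_1,v_2,v_3$ is in particular a free triple vertex contact of the triangular robot $T$ at its own three vertices. Applying Lemma \ref{lem_triangular_vvv} to $T$ bounds the number of such placements by $O(n^2)$, and summing over the $O(1)$ triples of vertices of $B$ yields the claimed $O(n^2)$ total.

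The only technical point is the containment $T \subseteq B$, which is immediate from convexity; the rest is book-keeping and does not require revisiting the parametric-plane machinery of Section \ref{sec_triangular}. Accordingly, I do not expect any genuine obstacle in executing this plan beyond the work already done in Lemma \ref{lem_triangular_vvv}.
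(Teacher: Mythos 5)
Your proposal is correct and is essentially identical to the paper's own one-line proof: both reduce to Lemma \ref{lem_triangular_vvv} by observing that a free triple vertex contact of $B$ at vertices $v_1,v_2,v_3$ is also a free triple vertex contact of the triangle they span (using $\operatorname{conv}(v_1,v_2,v_3)\subseteq B$ for a convex robot), and then sum over the $O(1)$ vertex triples. Your explicit remark that convexity of $B$ is what guarantees the containment, and hence the freeness of the triangular robot's placement, is a point the paper leaves implicit.
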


\begin{proof}
	Every triple contact made by three vertices of $B$ is also a triple contact made by a triangular robot $B'$ spanned by those three vertices (though the opposite is not necessarily the case).
\end{proof}

\section{The case of a cubical robot}\label{sec_cube}

In this section we prove Theorem \ref{thm_box}, regarding the case in which $B$ is a cube (or a box or a parallelepiped). Assume for concreteness that $B=[0,1]^3$ is an axis-parallel cube of side-length $1$.

Let us go over all types of triple contacts $O_1=(f_1, g_1)$, $O_2=(f_2, g_2)$, $O_3=(f_3,g_3)$.

Lemma \ref{lem_triangular_vvv} already addressed the case of three vertex contacts. Hence, we can assume without loss of generality that $O_1$ is not a vertex contact.

Due to the general position assumption, it is impossible to have two face contacts involving the same face or opposite faces of $B$.

\begin{lemma}\label{lem_axis_parallel_movement}
	In each of the following cases there are at most $O(n^2)$ triple contacts involving $f_1$ and $f_2$:
	\begin{itemize}
	\item $f_1$ and $f_2$ are the same edge or parallel edges of $B$.
	\item $f_1$ and $f_2$ are nonparallel faces of $B$.
	\item $f_1$ is a face of $B$, and $f_2$ is an edge of $f_1$ or of the face opposite $f_1$.
	\end{itemize}
\end{lemma}

\begin{proof}
	In all these cases, once we fix obstacle features $g_1$, $g_2$ for the first two contacts, the line of movement of $B$ that preserves these two contacts is axis-parallel. Suppose we move $B$ in one direction along this line until it makes a third contact with some obstacle feature $g_3$. Further movement in the same direction will make $B$ intersect $g_3$ until $B$ moves distance at least $1$. But after $B$ moves such a distance, it will no longer realize $O_1$ nor $O_2$. Hence, for fixed $g_1$, $g_2$, there are at most two possibilities for $g_3$, one for each direction of movement. Since there are $O(n^2)$ choices for $g_1$, $g_2$, the claim follows. (This argument is given in Lemma 2.1 of \cite{HY}.)
\end{proof}

In order to handle the remaining cases, we examine parametric planes of face and edge contacts.

\subsection{Parametric plane of a face contact}

Let $O_1=(f_1, g_1)$ be a face contact. The set of all placements of $B$ that realize $O_1$ can be parametrized by a pair of real numbers between $0$ and $1$. For example, if the face $f_1$ is parallel to the $xy$-plane, then we can represent every placement of $B$ that realizes $(f_1,g_1)$ by two real numbers $s,t\in[0,1]$, where $s$ denotes the distance from the vertex $g_1$ to the higher-$x$, $y$-parallel edge of $f_1$, and $t$ denotes the distance from $g$ to the higher-$y$, $x$-parallel edge of $f_1$. Real values of $s$ and/or $t$ outside of the range $[0,1]$ give rise to placements in which $g_1$ lies in the containing plane of $f_1$, though not in $f_1$ itself.

In the parametric plane $P_{O_1}\cong \R^2$, the unit square $\Gamma=[0,1]^2\subset P_{O_1}$ corresponds to the placements in which the contact $O_1$ actually takes place. Each point of $P_{O_1}$ can be marked \emph{legal} or \emph{illegal} according to whether the corresponding robot placement intersects the interior of some obstacle or not.

As before, given another contact specification $O_2=(f_2,g_2)$, the set of placements that realize both $O_1$ and $O_2$ corresponds either to a line segment $\sigma_{O_1,O_2} \subset P_{O_1}$ or to the empty set.

\begin{figure}
	\centering\includegraphics[width=12cm]{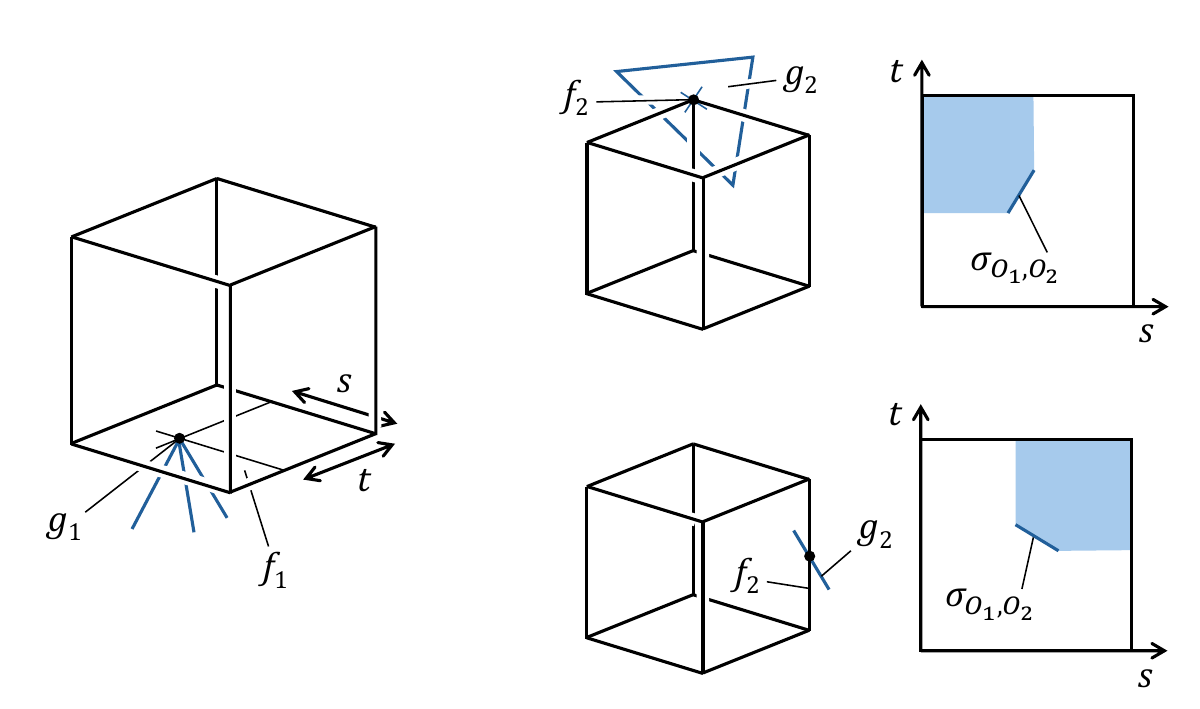}
	\caption{\label{fig_cube_facePP}Parametric plane of a face contact for a cube robot. The contact $O_1=(f_1,g_1)$ defining the parametric plane is shown on the left, and two different options for the second contact $O_2=(f_2, g_2)$, and the corresponding segments $\sigma_{O_1,O_2}$ in the parametric plane, are shown on the right.}
\end{figure}

Together with the line segment $\sigma_{O_1,O_2}$, the contact specification $O_2$ also induces an illegal region in $\Gamma$. Suppose, as before, that the face $f_1$ is parallel to the $xy$-plane. Recall that we are only interested in cases where $f_2$ is either a vertex of $B$, or an edge of $B$ perpendicular to $f_1$. Suppose that $f_2$ is either one of the top-left vertices of $B$ (after projecting to the $xy$-plane) or the edge of $B$ connecting these two vertices. Then the region of $\Gamma$ that lies above and to the left of the segment $\sigma_{O_1,O_2}$ is illegal. Similarly for the other cases. For each choice of $f_2$, the region of $\Gamma$ that lies behind $\sigma_{O_1,O_2}$ in two perpendicular axis-parallel directions is illegal. See Figure \ref{fig_cube_facePP}. Furthermore, if $O_2=(f_2,g_2)$, $O_2'=(f_2,g_2')$ are two vertex contacts involving the same vertex $f_2$ of $B$, then the segments $\sigma_{O_1,O_2}$, $\sigma_{O_1,O_2'}$ cannot intersect, because then the obstacle faces $g_2$, $g_2'$ would intersect.

\subsection{Parametric plane of an edge contact}

\begin{figure}
	\centering\includegraphics[width=15cm]{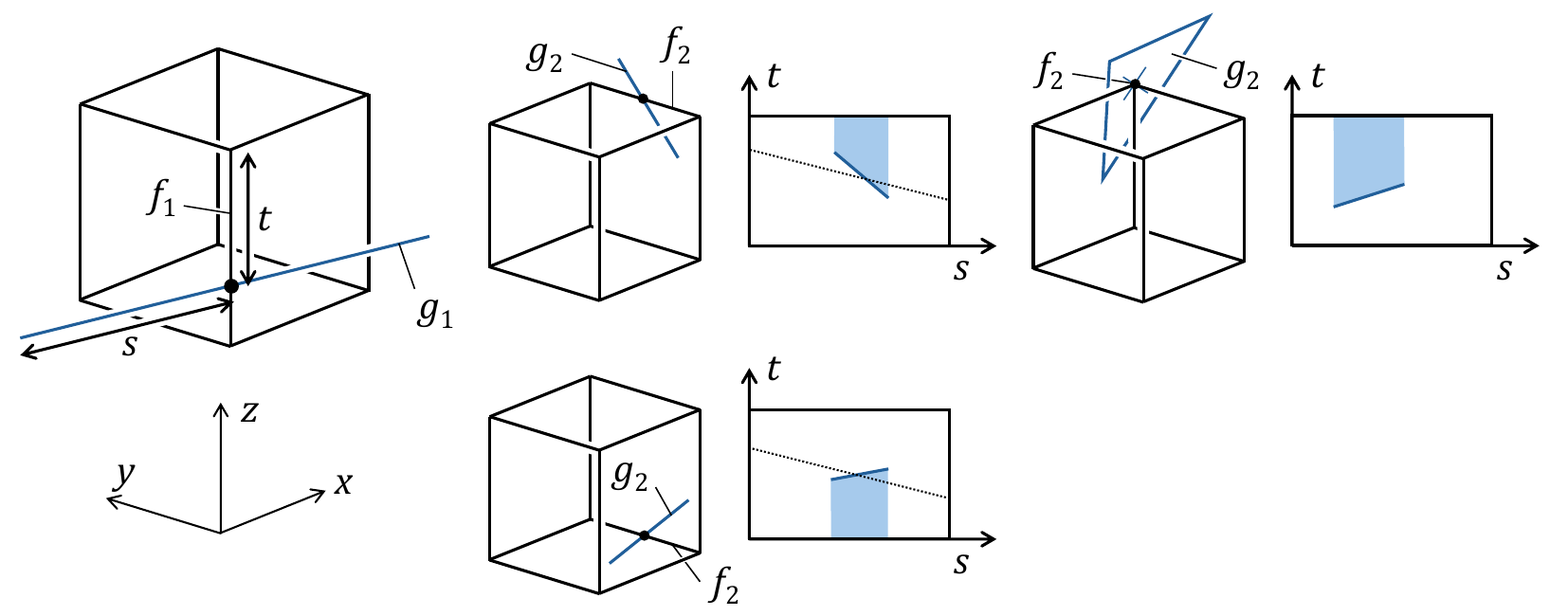}
	\caption{\label{fig_cube_edgePP_new}Parametric plane of an edge contact for a cube robot. The dotted lines in the parametric planes represent robot movements that preserve $O_1$ and are ``horizontal'' (parallel to the $xy$-plane).}
\end{figure}

Now let $O_1=(f_1, g_1)$ be an edge contact. Assume for concreteness that $f_1$ is the ``vertical'' edge of $B$ that satisfies $x=y=0$. In order for the contact $O_1$ to be possible, the obstacle segment $g_1$ must decrease in $y$-coordinate as it increases in $x$-coordinate.

We can represent each placement of $B$ in which $f_1$ (or its extension) intersects $g_1$ (or its extension) by a pair $(s, t)$, where $s$ is the distance between the lower-$x$ endpoint of $g_1$ and the point of contact $p$, and $t$ is the distance between $p$ and the higher-$z$ endpoint of $f_1$.

In the resulting parametric plane $P_{O_1}\cong \R^2$, the placements in which the contact $O_1$ actually takes place correspond to a rectangle $\Gamma=[0, s_{\mathrm{max}}]\times [0,1]\subset P_{O_1}$, where $s_{\mathrm{max}}$ is the length of $g_1$. A movement of $B$ that preserves the contact $O_1$ and is ``horizontal'' (preserves $z$-coordinate) corresponds in $P_{O_1}$ to a line of a certain slope $m=m_{g_1}$.

Each point of $P_{O_1}$ can be marked legal or illegal according to whether the corresponding placement of $B$ is legal or illegal. Consider a second contact specification $O_2=(f_2, g_2)$, where $f_2$ is either a vertex of $B$ or an edge of $B$ not parallel to $f_1$. Then the set of points corresponding to placements that realize $O_2$ is either empty or a line segment $\sigma_{O_1,O_2}\subset P_{O_1}$. Furthermore, the region of $\Gamma$ that is either vertically above or vertically below the segment $\sigma_{O_1,O_2}$ (depending on whether $f_2$ lies on the lower-$z$ or the higher-$z$ side of $B$) is illegal. See Figure \ref{fig_cube_edgePP_new}.

If $O_2=(f_2, g_2)$, $O'_2=(f_2, g'_2)$ are vertex contacts involving the same vertex $f_2$ of $B$, then the segments $\sigma_{O_1,O_2}$, $\sigma_{O_1,O'_2}$ cannot intersect, because then the obstacle faces $g_2$, $g'_2$ would intersect.

\begin{figure}
	\centering\includegraphics[width=6cm]{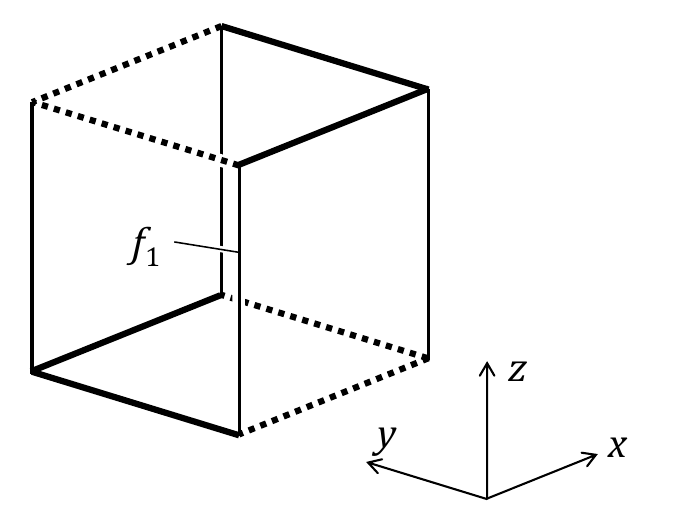}
	\caption{\label{fig_compare_to_m}If $f_2$ is one of the bolded edges, then a robot movement that preserves $O_1=(f_1,g_1)$ and $O_2=(f_2,g_2)$ will descend (decrease in $z$-coordinate) as it increases in $x$-coordinate. If $f_2$ is one of the dotted edges, then a robot movement that preserves $O_1$ and $O_2$ will ascend as it increases in $x$-coordinate.}
\end{figure}

Out of the eight edges $f_2$ of $B$ that are not parallel to $f_1$, four of them yield segments $\sigma_{O_1,O_2}$ that have slope larger than $m$ for all choices of obstacle edge $g_2$. The other four edges $f_2$ of $B$ yield segments $\sigma_{O_1,O_2}$ that have slope smaller than $m$ for all choices of obstacle edge $g_2$. See Figure \ref{fig_compare_to_m}.

\subsection{The case analysis}

Let us now go over the possibilities not handled by Lemmas \ref{lem_triangular_vvv} and \ref{lem_axis_parallel_movement}.

Suppose at least one contact is a face contact. Say $f_1$ is a face of $B$. Then $f_2$, $f_3$ cannot both be edges of $B$, because then either they would be parallel, or else one of them would belong to $f_1$ or to the face opposite $f_1$. Thus, one of $f_2$, $f_3$, say $f_2$, must be a vertex of $B$, while $f_3$ must be either another vertex of $B$ or an edge perpendicular to $f_1$. For each such choice of $f_1$, $f_2$, $f_3$, there are $O(n)$ choices of $g_1$. Given a choice of $g_1$, consider the parametric plane $P_{O_1}$. The segments $\sigma_{O_1,O_2}$, over all choices of $g_2$, form a family of segments that do not intersect one another. Hence, we proceed as in Section \ref{subsec_counting}, invoking Lemma \ref{lem_env} (Case 3). There are $O(n)$ triple contacts for each choice of $g_1$, and hence a total of $O(n^2)$ contacts involving our choice of $f_1$, $f_2$, $f_3$.

Now suppose no contact is a face contact. At least one contact, say $O_1$, must be an edge contact. Hence, let $f_1$ be an edge of $B$. Each of $f_2$, $f_3$ must be an edge or a vertex of $B$, and no two of $f_1$, $f_2$, $f_3$ can be parallel edges. If one of $f_2$, $f_3$ is a vertex of $B$, then Lemma \ref{lem_env} (Case 3) applies again, and we again get a bound of $O(n^2)$.

Now suppose $f_1$, $f_2$, $f_3$ are three mutually nonparallel edges of $B$. For each $f_i$, let $h_{i,1}$, $h_{i,2}$ be the two faces of $B$ that are perpendicular to $f_i$. If at least one $f_i$ satisfies the property that the other two edges $f_j$, $f_k$ belong to different faces $h_{i,1}$, $h_{i,2}$, then in the parametric plane $P_{O_i}$ we are looking at intersections between an upper envelope and a lower envelope, so Lemma \ref{lem_env} (Case 2) applies, and we get once again a bound of $O(n^2)$.

The only remaining case is where $f_1$, $f_2$, $f_3$ are three edges incident to the same vertex of $B$. Fix an obstacle edge $g_1$. In the parametric plane $P_{O_1}$, every segment $\sigma_{O_1,O_2}$ has slope smaller than some slope $m=m_{g_1}$, while every segment $\sigma_{O_1,O_3}$ has slope larger than $m$. Hence, Lemma \ref{lem_env} (Case 4) applies, and we again get a bound of $O(n)$ triple contacts for our choice of $g_1$, or $O(n^2)$ contacts in total.

This case analysis concludes the proof of Theorem \ref{thm_box}.

\section{The case of a fully-parallel polygonal robot}

Theorem \ref{thm_fully-parallel}, regarding a fully-parallel polygonal robot, follows by a few small modifications to the proof of Theorem \ref{thm_box}. As before, we go through all possible triple contacts $O_1=(f_1, g_1)$, $O_2=(f_2,g_2)$, $O_3=(f_3,g_3)$.

First, consider the case in which $O_1=(f_1,g_1)$ is a face contact, meaning $f_1$ is $B$ itself and $g_1$ is an obstacle vertex. By the general position assumption, no other contact can be a face contact. In order to realize $O_1$, the robot $B$ is restricted to move in a plane parallel to $B$. As mentioned in the Introduction, in the planar case the complexity of the free space is $O(n)$. Since there are $O(n)$ choices of $O_1$, we get at most $O(n^2)$ triple contacts of this type.

Consider the case where two edge contacts $O_1=(f_1, g_1)$, $O_2=(f_2, g_2)$ involve robot edges $f_1$, $f_2$ that are either equal or parallel. Let $d$ and $D$, with $d<D$, be lengths of $f_1$ and $f_2$. Then we argue as in the proof of Lemma \ref{lem_axis_parallel_movement}, except that now there are at most $1 + \lceil D/d \rceil$ possibilities for $g_3$, given $g_1$, $g_2$. Since this number is a constant, we again get an overall bound of $O(n^2)$ for this case.

\begin{figure}
	\centering\includegraphics[width=14cm]{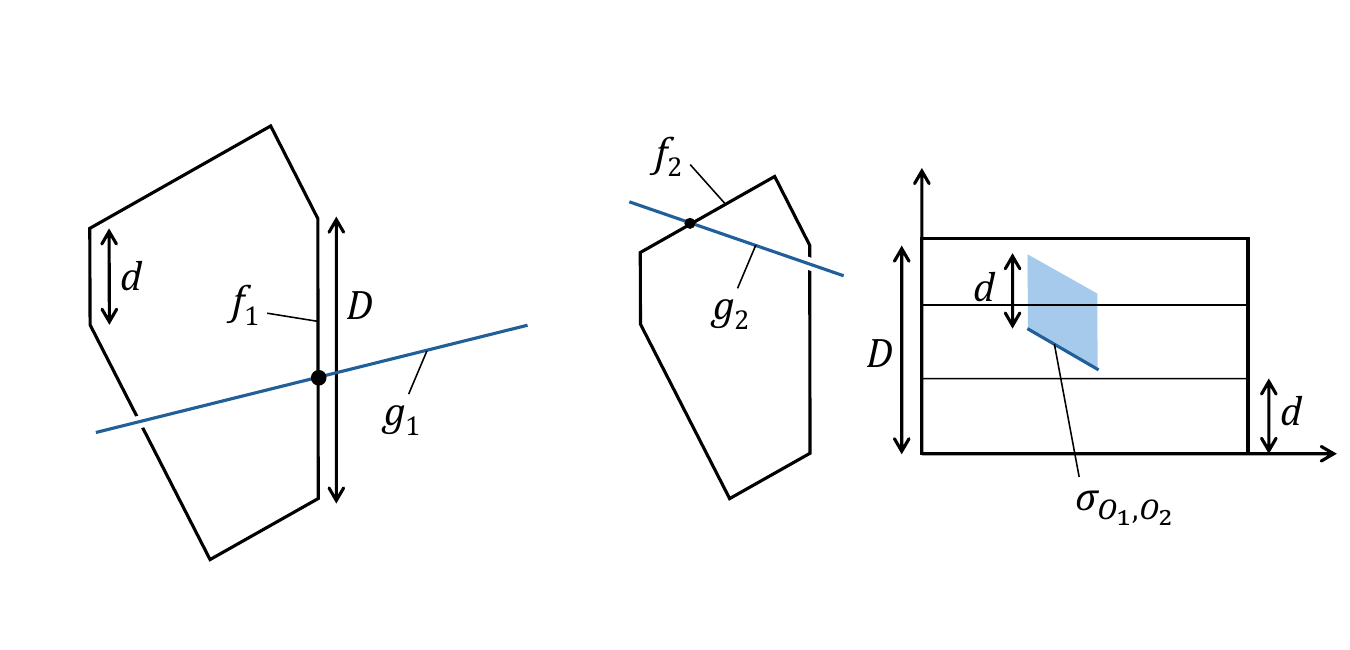}
	\caption{\label{fig_fully_parallel_edgePP}When the robot is a fully-parallel polygon, in the parametric plane of an edge contact $O_1$, the parallelogram-shaped region of height $d$ above or below the segment $\sigma_{O_1,O_2}$ is certainly illegal. Hence, we can divide the rectangle into horizontal strips and proceed as before.}
\end{figure}

Next, consider an edge contact $O_1=(f_1, g_1)$. Let $f'_1$ be the edge of $B$ that is parallel to $f_1$, and let $d$ and $D$ be the smaller and the larger length among $f_1$, $f'_1$, respectively. The edges $f_1$, $f'_1$ partition the rest of the boundary of $B$ into two parts, which we call $\pi_1(f_1)$ and $\pi_2(f_1)$. Consider a second contact $O_2=(f_2, g_2)$ where $f_2$ is either a vertex of $B$ or an edge of $B$ different from $f_1$ and $f'_1$. Consider the segment $\sigma_{O_1,O_2}\in S_{O_1,f_2}$ in the parametric plane $P_{O_1}$. Then there is a parallelogram-shaped region of height $d$, either vertically above or vertically below $\sigma_{O_1,O_2}$ (depending on whether $f_2$ belongs to $\pi_1(f_1)$ or to $\pi_2(f_1)$), that is certainly illegal. See Figure \ref{fig_fully_parallel_edgePP}. Hence, we can partition the parametric rectangle $\Gamma$ into $\lceil D/d \rceil$ horizontal strips of width at most $d$. This operation might cut some segments in $S_{O_1,f_2}$ into subsegments. Within each strip we are interested only in the lower or the upper envelope of the subsegments in that strip.

If one of $f_2$, $f_3$ is a vertex of $B$, then Lemma \ref{lem_env} (Case 3) applies. 

Finally, suppose $f_1$, $f_2$, $f_3$ are three mutually nonparallel edges of $B$. Then at least one of the edges $f_i$ satisfies the property that the other two edges $f_j$, $f_k$ belong to opposite boundary parts $\pi_1(f_i)$, $\pi_2(f_i)$. In the corresponding parametric plane $P_{O_i}$, the segments $\sigma_{O_i,O_j}$ will have illegal regions above the segments, while the segments $\sigma_{O_i,O_k}$ will have illegal regions below the segments (or vice versa). Hence, Lemma \ref{lem_env} (Case 2) applies.

With these modifications, Theorem \ref{thm_fully-parallel} is proven.

\section{Discussion and open problems}

The main open problem is to close the gap between $O(n^2\log n)$ and $\Omega(n^2\alpha(n))$ for the general case of a convex polyhedral robot.

Another open problem is to obtain more refined bounds that depend on the complexity of the robot, as well as on the number of obstacles, and not just on the total number of obstacle vertices, under the assumption that the obstacles are convex.

Suppose there are $k$ convex obstacles with a total of $n$ vertices. For the general case where the robot is an $r$-vertex polyhedron, Aronov and Sharir \cite{AS} proved an upper bound of $O(rnk\log k)$. It is easy to achieve a lower bound of $\Omega(nk)$ for any robot. For a flat triangular robot, the construction of Figure \ref{fig_triangle_lower_bound} can be modified to achieve a lower bound of $\Omega(nk\alpha(k))$ (see the details in \cite{AS}).  Halperin (personal communication) derived an upper bound of $O(nk)$ for the case of a segment-shaped robot. These are the currently best known bounds, as far as we know.

\paragraph{Acknowledgements.} Thanks to Danny Halperin for suggesting me to look into this problem and for useful conversations. Thanks to an anonymous referee for carefully reading a previous version of this paper and providing many helpful comments.

\bibliographystyle{plainurl}
\bibliography{new_robot_arXiv}

\begin{thebibliography}{1}

\bibitem{AS}
Boris Aronov and Micha Sharir.
\newblock On translational motion planning of a convex polyhedron in 3-space.
\newblock {\em SIAM Journal on Computing}, 26(6):1785--1803, 1997.
\newblock \href {https://doi.org/10.1137/S0097539794266602}
  {\path{doi:10.1137/S0097539794266602}}.

\bibitem{handbookChapter}
Dan Halperin, Oren Salzman, and Micha Sharir.
\newblock Algorithmic motion planning.
\newblock In Csaba~D. Toth, Joseph O'Rourke, and Jacob~E. Goodman, editors,
  {\em Handbook of discrete and computational geometry}, pages 1311--1342.
  Chapman and Hall/CRC, 2017.

\bibitem{HY}
Dan Halperin and Chee-Keng Yap.
\newblock Combinatorial complexity of translating a box in polyhedral 3-space.
\newblock {\em Computational Geometry}, 9(3):181--196, 1998.
\newblock (Preliminary version published in \emph{Proceedings of the ninth
  annual symposium on computational geometry (SoCG)}, 1993.).
\newblock \href {https://doi.org/10.1016/S0925-7721(97)00030-8}
  {\path{doi:10.1016/S0925-7721(97)00030-8}}.

\bibitem{HS}
Sergiu Hart and Micha Sharir.
\newblock Nonlinearity of {D}avenport--{S}chinzel sequences and of generalized
  path compression schemes.
\newblock {\em Combinatorica}, 6(2):151--177, 1986.
\newblock \href {https://doi.org/10.1007/BF02579170}
  {\path{doi:10.1007/BF02579170}}.

\bibitem{klps86}
Klara Kedem, Ron Livne, J{\'a}nos Pach, and Micha Sharir.
\newblock On the union of {J}ordan regions and collision-free translational
  motion amidst polygonal obstacles.
\newblock {\em Discrete \& Computational Geometry}, 1(1):59--71, 1986.
\newblock \href {https://doi.org/10.1007/BF02187683}
  {\path{doi:10.1007/BF02187683}}.

\bibitem{lozano1979}
Tom{\'a}s Lozano-P{\'e}rez and Michael~A Wesley.
\newblock An algorithm for planning collision-free paths among polyhedral
  obstacles.
\newblock {\em Communications of the ACM}, 22(10):560--570, 1979.
\newblock \href {https://doi.org/10.1145/359156.359164}
  {\path{doi:10.1145/359156.359164}}.

\bibitem{WS}
Ady Wiernik and Micha Sharir.
\newblock Planar realizations of nonlinear {D}avenport-{S}chinzel sequences by
  segments.
\newblock {\em Discrete \& Computational Geometry}, 3(1):15--47, 1988.
\newblock \href {https://doi.org/10.1007/BF02187894}
  {\path{doi:10.1007/BF02187894}}.

\end{thebibliography}

\end{document}